\newcommand{\tfam}{\mathscr{T}}
\newcommand{\ca}{\textrm{CA}}
\newcommand{\cat}{\textrm{CAT}}
\newcommand{\ia}{\textrm{IA}}
\newcommand{\iat}{\textrm{IAT}}
\newcommand{\iatrtrt}{\textrm{IAT}_{rt,rt}}
\newcommand{\iatrtlt}{\textrm{IAT}_{rt,lt}}
\newcommand{\iatltlt}{\textrm{IAT}_{lt,lt}}
\newcommand{\iatltrt}{\textrm{IAT}_{lt,rt}}
\newcommand{\catrtrt}{\textrm{CAT}_{rt,rt}}
\newcommand{\catrtlt}{\textrm{CAT}_{rt,lt}}
\newcommand{\catltlt}{\textrm{CAT}_{lt,lt}}
\newcommand{\catltrt}{\textrm{CAT}_{lt,rt}}
\newcommand{\dfst}{\textrm{DFST}}
\newcommand{\ufst}{\textrm{UFST}}
\newcommand{\sfst}{\textrm{SFST}}
\newcommand{\dpdt}{\textrm{DPDT}}
\newcommand{\spdt}{\textrm{SPDT}}
\newcommand{\updt}{\textrm{UPDT}}
\newcommand{\border}{\texttt{\#}}
\newcommand{\dollar}{\texttt{\$}}
\newcommand{\kand}{\texttt{\&}}
\newcommand{\rightend}{\mathord{\vartriangleleft}}  
\theoremstyle{plain} 
 \newtheorem{definition}{Definition}
 \newtheorem{lemma}[definition]{Lemma}
 \newtheorem{theorem}[definition]{Theorem}
 \newtheorem{corollary}[definition]{Corollary}
 \newtheorem{proposition}[definition]{Proposition}
 \newtheorem{example}[definition]{Example}
\def\squareforqed{$\Box$}
\def\qed{\ifmmode\squareforqed\else{\unskip\nobreak\hfil%
  \penalty50\hskip1em\null\nobreak\hfil\squareforqed%
  \parfillskip=0pt\finalhyphendemerits=0\endgraf}\fi}
\newenvironment{proof}{\noindent{\textbf{Proof}\ }}{\qed\medskip}
\title{Transductions Computed by\\ One-Dimensional Cellular Automata}
\author{%
Martin Kutrib and Andreas Malcher
\institute{Institut f\"ur Informatik, Universit\"at Giessen,\\
  Arndtstr.~2, 35392 Giessen, Germany}
  \email{$\{$kutrib,malcher$\}$@informatik.uni-giessen.de}
}
\begin{document}

\maketitle

\begin{abstract}
Cellular automata are investigated towards their ability to compute 
transductions, that is, to transform inputs into outputs.
The families of transductions computed are classified with regard to 
the time allowed to process the input and to compute the output.
Since there is a particular interest in fast transductions,
we mainly focus on the time complexities real time and linear time.
We first investigate the computational capabilities of
cellular automaton transducers by comparing them to 
iterative array transducers, that is, we compare parallel
input/output mode to sequential input/output mode of massively parallel
machines.
By direct simulations, it turns out that the parallel mode 
is not weaker than the sequential one.
Moreover, with regard to certain time complexities cellular
automaton transducers are even more powerful than iterative arrays.
In the second part of the paper, the model in question is compared 
with the sequential devices single-valued finite state transducers 
and deterministic pushdown transducers. It turns out that both models 
can be simulated by cellular automaton transducers faster than by iterative 
array transducers.
\end{abstract}

\section{Introduction}\label{sec:intro}

Cellular automata have widely been investigated as a massively parallel
computation model. In connection with the problems to recognize syntactical
patterns, or in our terms formal languages, cellular automata have been
considered in~\cite{Smith:1970:CAFL,Smith:1972:RTLRODCA} for the first time.
Over the years substantial progress has been achieved in this field
but there are still some basic open problems with deep relations 
to other fields (see, for example,~\cite{kutrib:2009:calt}).
The results comprise the relations between parallel and sequential input
mode, the impact of two-way and one-way inter-cell communication, 
the relation between different time complexities such as real time, linear time, 
or arbitrary time, the capabilities with respect to accept linguistic
language families such as regular and context-free languages, closure properties, 
and decidability questions.

Computational models are not only interesting from the viewpoint of
recognizing some input, but also from the viewpoint of transforming an input into
an output. For example, a parser for a formal language should not only return the
information whether or not the input word can be parsed, but also the parse tree
in the positive case. Here, two things are important: first the information
whether the input is accepted, i.e., whether the input can be parsed, and second
the production of the output, i.e., the construction of a parse tree if the
word can be parsed. If the word cannot be parsed, the output is no interest.
This motivation led to the investigation of models such as finite state transducers or
pushdown transducers which are classical finite automata or pushdown
automata where each transition is associated with some output. If an input
is accepted, the result of the transduction is the output of the transitions 
in the order they have been applied. Both models have been studied in detail 
(see, for example,~\cite{Aho:1972:tptcv1,Berstel:1979:tcfl}),
and many applications such as in the context of parsing are known.

Parallel transducers have been investigated in~\cite{Ibarra:1991:PPOWLAFSM,kutrib:2010:tcia:proc}.
In the first paper, one-way linear iterative arrays are introduced. In
this model, the input is supplied to the leftmost cell, and the output is emitted at the
rightmost cell. In between there are as many cells as the input is long, the information
flow is one-way, from left to right, and the output does not depend on the
fact whether or not the input is accepted. In the latter paper, iterative
array transducers are introduced, where the leftmost cell receives the input
and emits the output. For the computational capacity of such devices, the time 
complexities for processing the input as well as for computing the output
are crucial. It turned out that iterative array transducers with
time complexity (real-time, real-time) are less powerful than those 
working in (real-time, linear-time). In turn, the latter are less powerful 
than iterative array transducers with time complexity (linear-time, linear-time).
Moreover, it has been shown that deterministic finite state transducers 
as well as certain deterministic pushdown transducers can be simulated by 
iterative arrays with time complexity (real-time, real-time), whereas 
nondeterministic single-valued finite state transducers are simulated by 
iterative arrays in (real-time, linear-time).

Here, we complement these results with investigations of the computational
capacity of cellular automaton transducers compared with iterative arrays 
and sequential devices. The detailed definitions and two examples are 
given in Section~\ref{sec:def}. The results on the computational
capacities compared with iterative array transducers are obtained 
in Section~\ref{sec:compcap}. First, we present a construction which
shows that any iterative array transducer can be simulated by some
cellular automaton transducer preserving the time complexity as long as 
the time complexity is `fair'. Together with the example of a 
transduction not computed by any iterative array in real time,
we obtain as a consequence that cellular automaton transducers are 
more powerful than iterative array transducer with respect
to the time complexities (real-time, real-time) and (real-time, linear-time). For the
combination (linear-time, linear-time) both devices are shown to be equally powerful.
In Section~\ref{sec:comptransd}, we compare the model in question with 
sequential machines. In particular, it is shown they can simulate 
single-valued finite state transducers and deterministic pushdown 
transducers faster than iterative arrays. The former are simulated in
(real-time, real-time) whereas the latter are simulated in (real-time, linear-time).

\section{Preliminaries and Definitions}\label{sec:def}

We denote the rational numbers by ${\mathbb{Q}}$, and 
the non-negative integers by $\mathbb{N}$. 
For the empty word we write~$\lambda$, the reversal of
a word $w$ is denoted by~$w^R$, and for the length of $w$
we write~$|w|$. 
For the number of occurrences of a symbol~$a$ in~$w$ we
use the notation~$|w|_a$. The set of all words over the alphabet $A$ 
whose lengths are at most $j\geq 0$ is denoted by $A^{\leq j}$. 
We write~$\subseteq$ for set inclusion, and $\subset$ for strict set inclusion.
In order to avoid technical overloading in writing, two 
languages~$L$ and $L'$ are considered to be equal, if they differ at most by
the empty word, that is, $L \setminus \{ \lambda \} = L' \setminus \{ \lambda \}$. 

A (one-dimensional) two-way cellular automaton transducer 
is a linear array of identical deterministic finite state machines, 
sometimes called cells, where each cell except the two outermost ones 
is connected to its both nearest neighbors.
We identify the cells by positive integers. 
The transition of a cell depends on its current state
and the current states of its neighbors, where the outermost cells 
receive information associated with a boundary symbol 
on their free input lines. The cells work synchronously at discrete time
steps.

The input/output mode for cellular automaton transducers is called parallel. 
One can suppose that all cells fetch their input symbol 
during a pre-initial step. Here we assume that each cell
is additionally equipped with an output register that is initially
empty and can be filled once by the cell. When all
output registers have been filled, the transduction 
is completed.

\begin{definition}
A \emph{cellular automaton transducer $(\cat)$} is a system
$\langle S,F,A,B,\border,\delta\rangle$, where 
\begin{enumerate}
\item
$S$ is the finite, nonempty set of \emph{cell states},
\item
$F\subseteq S$ is the set of \emph{accepting states},
\item
$A\subseteq S$ is the nonempty set of \emph{input symbols},
\item
$B$ is the finite set of \emph{output symbols} not including the special
  symbol $\bot$,
\item
$\border\notin S$ is the permanent \emph{boundary symbol}, and
\item
$\delta: (S\cup\{\border\})\times S\times (S\cup\{\border\}) \to S\times (B^*\cup\{\bot\})$ is the 
\emph{local transition function}. 
\end{enumerate}
\end{definition}

A \emph{configuration} of a cellular automaton transducer  
$M=\langle S,F,A,B,\border,\delta\rangle$
at time $t\geq 0$ is a description of its global state, which can formally be described
by two mappings $c_t:\{1,2,\dots,n\} \to S$ and
\mbox{$o_t:\{1,2,\dots,n\} \to B^*\cup\{\bot\}$,} for $n\geq 1$,
which map the single cells to their current states and to their output
emitted, where $\bot$ means no output so far. 
The operation starts at time~0 in a so-called \emph{initial configuration},
which is defined by the given input $w=a_1a_2\cdots a_n\in A^+$,
and no outputs. We set $c^{(w)}_{0}(i)=a_i$ and $o_0(i)=\bot$, for $1\leq i\leq n$. 
Successor configurations are computed
according to the global transition function~$\Delta$.
For convenience, we write $\delta_s$ for the projection on the first
component of $\delta$, that is, the successor state, and
$\delta_o$ for the projection on the second
component, that is, the output emitted.
Let $(c_t, o_t)$, $t\geq 0$, be a configuration with $n\geq 2$, then its 
successor $(c_{t+1}, o_{t+1})$ is as follows.
\[
c_{t+1}(i) =  
\begin{cases}
  \delta_s(c_t(i-1),c_t(i),c_t(i+1)) & \text{if } i\in\{2,3,\dots,n-1\}\\
  \delta_s(\border, c_t(1), c_t(2))  & \text{if } i=1\\
  \delta_s(c_t(n-1),c_t(n),\border)  & \text{if } i=n\\
\end{cases}
\]
For $n=1$, the next state of the sole cell is $\delta_s(\border,c_t(1),
\border)$. For $o_t$ we obtain
\[
o_{t+1}(i) =  
\begin{cases}
  o_t(i) & \text{if } o_t(i)\ne \bot\\
  \delta_o(c_t(i-1),c_t(i),c_t(i+1)) & \text{if } o_t(i)= \bot \text{ and }i\in\{2,3,\dots,n-1\}\\
  \delta_o(\border, c_t(1), c_t(2))  & \text{if } o_t(i)= \bot \text{ and } i=1\\
  \delta_o(c_t(n-1),c_t(n),\border)  & \text{if } o_t(i)= \bot \text{ and } i=n\\
\end{cases}
\]
and, thus, $\Delta$ is induced by~$\delta$.
As usual we extend $\Delta$ to sequences of configurations
and denote it by $\Delta^*$. That is, $\Delta^0$ is the identity,
$\Delta^t = \Delta(\Delta^{t-1})$, $1\leq t$, and
$\Delta^*=\bigcup_{0\leq t} \Delta^t$. Thus, 
$(c_t,o_t) \in \Delta^*(c,o)$ indicates that it is
possible for $M$ to go from the configuration $(c,o)$ to the
configuration $(c_t,o_t)$ in a sequence of zero or more steps.

An input $w$ is \emph{accepted} by a $\cat$ if at some time step
during the course of its computation the leftmost cell enters an 
accepting state. 
Transducer $M$ \emph{transforms} input words 
$w\in A^+$ into output words $v\in B^*$. For a successful transformation 
$M$ has to accept the input, otherwise the output is not recorded. Moreover,
each cell has to emit an output:
$$
M(w) = v,
$$
if $w$ is accepted by $M$, $(c_t, o_t)\in \Delta^*(c_0^{(w)}, o_0)$, 
$o_t(i)\neq \bot$, $1\leq i\leq |w|$, and $v=o_t(1)o_t(2)\cdots o_t(|w|)$.
The \emph{transduction realized by $M$}, denoted by $T(M)$,
is the set of pairs $(w,v)\in A^+\times B^*$ such that
$M(w) = v$.

Let $t_i, t_o:\mathbb{N}\to\mathbb{N}$ be two mappings. 
If for all $(w,v)\in T(M)$, the input $w$ is accepted after at most~$t_i(|w|)$ 
time steps, and $o_t(i)\neq \bot$, $1\leq i\leq |w|$, after at most 
$t_o(|w|)$ time steps, then~$M$ is said to be of time complexity $(t_i,t_o)$ and
we write $\cat_{t_i,t_o}$. 
The family of transductions realized by $\cat_{t_i,t_o}$ is denoted by 
$\tfam(\cat_{t_i,t_o})$. If $t_i$ or $t_o$ is the identity function~$n$,
we call it \emph{real time} and write $rt$. 
If $t_i(n)$ or $t_o(n)$ is of the form $k\cdot n$, for some 
$k\in\mathbb{Q}$, $k\geq 1$, we call it \emph{linear time} and write $lt$.

If we build the projection on the first components of $T(M)$,
then the cellular automaton transducer degenerates to a
cellular automaton acceptor ($\ca$). The projection on the first components 
is denoted by~$L(M)$.

In order to clarify the notation we give two examples. The first example
shows that $\cat$ can copy their input in real time. The second example shows
that $\cat$ can sort a binary input in real time.

\begin{example}\label{exa:copy}
The transduction $\{\,(w,ww) \mid w \in \{a,b\}^+\,\}$ belongs to
the family $\tfam(\catrtrt)$.
The basic idea is to use one register to shift the input from left to right
and another register to shift the input from right to left. Additionally,
two signals are started at both ends which cause the cells to emit the correct output
(see Figure~\ref{fig:copy}).

More detailed, let $w=a_1a_2 \cdots a_n$ be the input and $n$ be even. In the first time step,
cell $1$ emits $a_1a_2$ and cell $n$ emits $a_{n-1}a_n$. In the next
time step, cell $2$ emits $a_3a_4$ and cell $n-1$ emits $a_{n-3}a_{n-2}$. 
This is possible since the necessary information has been shifted to the 
corresponding cells and their neighborhood. Generalizing this observation
to $1 \le i \le \frac{n}{2}$, we obtain that cell $i$ emits $a_{2i-1}a_{2i}$
and cell $n-i+1$ emits $a_{n-2i+1}a_{n-2i+2}$
at time $i$. If $n$ is odd, the construction is similar. 
Now cells $i$ and $n-i+1$ emit $a_{2i-1}a_{2i}$ and $a_{n-2i+1}a_{n-2i+2}$
at time $i$, for $1 \le i \le \lfloor \frac{n}{2} \rfloor$. 
Finally, cell $\lceil \frac{n}{2} \rceil$ emits $a_na_1$ at time 
step~$\lceil \frac{n}{2} \rceil$.
\qed
\end{example}

\begin{figure}[!ht]
\centering
\includegraphics[scale=.9]{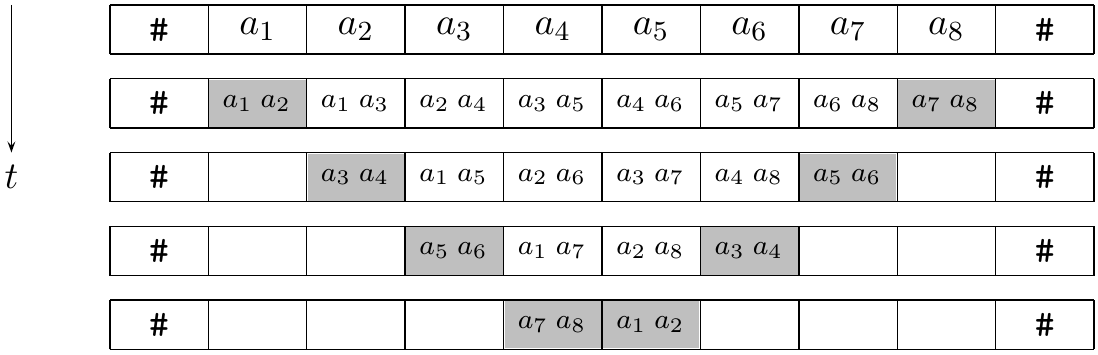}
\caption{Schematic computation of a $\catrtrt$ copying input $a_1a_2 \cdots a_8$.
The step at which a cell has emitted its output is highlighted.}
\label{fig:copy}
\end{figure}

\begin{example}\label{exa:sort}
Here, we consider the transduction $\{\,(w,a^{|w|_a}b^{|w|_b}) \mid w \in
\{a,b\}^+\,\}$ and show that it is computed by a $\catrtrt$.
The basic idea is that any two neighboring cells where the left cell
carries a $b$ and the right cell carries an $a$ switch their contents. 
By this local transpositions, all $a$'s will eventually be in the left 
part of the array whereas all $b$'s
will be in the right part (see Figure~\ref{fig:sort}).

Clearly, on any input of length $n$ the correct sorting has been achieved after at most $n$ time
steps. There is one problem to be solved: since the transpositions are only local,
a cell cannot know whether its current content is final and has to be
emitted. We can cope with the problem by synchronizing all cells at time~$n$. 
The synchronization is realized by the well-known Firing Squad Synchronization Problem (FSSP), 
that is implemented on an additional track.
In the first time step, two instances of the FSSP are started, one in the leftmost cell 
and the other one in the rightmost cell. First, let $n$ be even. When the
leading signals of both instances meet in the center at cells $\frac{n}{2}$ 
and $\frac{n}{2}+1$, they are reflected and cells $1,2, \ldots, \frac{n}{2}$ 
and cells $\frac{n}{2}+1, \frac{n}{2}+2, \ldots, n$ are treated as two
separate instances of the FSSP. Since each FSSP can be set up to synchronize 
$m$ cells in time $2m$, we obtain that $\frac{n}{2}$ cells are synchronized 
in time $n$. Since both FSSP work in parallel,
we obtain that all cells are synchronized at time step $n$. 
The remaining case where $n$ is odd is handled similarly.
\qed
\end{example}

\begin{figure}[!ht]
\centering
\includegraphics[scale=0.9]{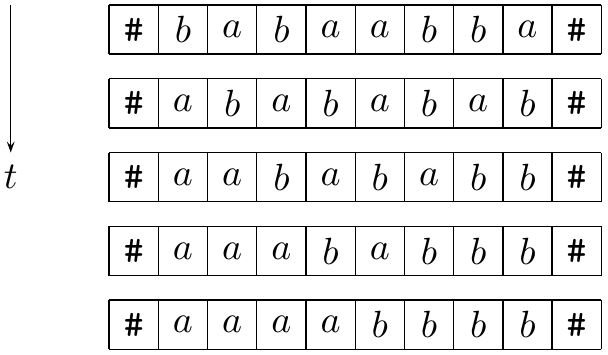}
\caption{Schematic computation of a $\catrtrt$ sorting an input $babaabba$ by local
transpositions. The final synchronization is not depicted.}
\label{fig:sort}
\end{figure}

\section{Computational Capacity of Cellular Automaton Transducers}
\label{sec:compcap}

This section is devoted to the computational capacity of cellular automaton
transducers compared with the power of iterative array transducers
as studied in~\cite{kutrib:2010:tcia:proc}. While $\cat$ are arrays of
finite state machines working in parallel input/output mode,
iterative array transducers receive the input and emit the output sequentially
by the leftmost cell, the so-called communication cell.

Basically, an iterative array transducer ($\iat$) is a linear 
array of deterministic finite state machines, where each
cell except the leftmost one is connected to its both nearest neighbors.
The distinguished leftmost cell is the communication cell that
is connected to its neighbor to the right and to the input/output supply
(see Figure~\ref{fig:pic-ia}). 
Initially, all cells are in the so-called quiescent state. 
At each time step the communication cell reads an input symbol
and writes a possibly empty string of output symbols.
To this end, we have different local transition functions. 
All cells but the communication cell change their state depending on
their current state and the current states of their neighbors. 
The state transition and output of the communication
cell depends on its current state, the current state of its neighbor, and 
on the current input symbol (or if the whole input has been consumed 
on a special end-of-input symbol). 
As for cellular automaton transducers, the cells work synchronously at 
discrete time steps.

\begin{figure}[!ht]
\centering
\includegraphics[scale=1]{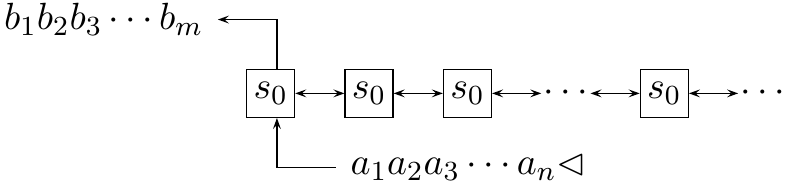}
\caption{An iterative array transducer.}
\label{fig:pic-ia}
\end{figure}

For details of later constructions, we provide the definition more formally.
An \emph{iterative array transducer} $(\iat)$ is a system 
$\langle S, F, A, B, \rightend, s_0, \delta,\delta_0 \rangle$, where 
$S$ is the finite, nonempty set of \emph{cell states},
$F\subseteq S$ is the set of \emph{accepting states},
$A$ is the finite set of \emph{input symbols},
$B$ is the finite set of \emph{output symbols},
$\rightend\notin A$ is the \emph{end-of-input symbol},
$s_0\in S$ is the \emph{quiescent state},
$\delta: S^3 \to S$ is the \emph{total local transition function for 
           non-communication cells} satisfying
$\delta(s_0,s_0,s_0)=s_0$, and
$\delta_0:(A\cup \{\rightend\})\times S^2\to B^*\times S$ is the 
\emph{partial local transition function for the communication cell}.
The $\iat$ halts when the transition function~$\delta_0$ is not defined
for the current configuration. Similar as for $\cat$, an input $w$ is accepted 
when the communication cell enters an accepting state at some time $t$
during the computation. The pair~$(w,v)$ belongs to the transduction computed
by the $\iat$ if $w$ is accepted, the communication cell halts, and 
$v$ is the total output emitted during the computation.
The mappings $t_i$ and $t_o$ for the time complexities are defined analogously
to $\cat_{t_i,t_o}$. If $t_i$ or $t_o$ is the function $n+1$, we call it \emph{real time} and write $rt$. 
If $t_i(n)$ or $t_o(n)$ is of the form $k\cdot n$, for some 
$k\in\mathbb{Q}$, $k\geq 1$, we call it \emph{linear time} and write $lt$.

Any transduction computed by a cellular automaton can be divided into two 
tasks. One is the acceptance of the input, the other
one the transformation of the input into the output. Both tasks have to end
successfully in order to obtain a valid computation. On the one hand, 
this allows to modularize constructions of cellular automaton transducers
as both parts can be implemented independently on different tracks. On the
other hand, this implies that a language, which is not accepted by any 
cellular automaton in time $t_i$, cannot be the projection on the 
first components of any transduction belonging to any class 
$\tfam(\cat_{t_i,t_o})$. Unfortunately, it is a long-standing open
problem whether there are languages accepted by two-way cellular automata
in arbitrary, that is, exponential time but cannot be accepted in 
real time~(see, for example,~\cite{kutrib:2009:calt}). However, the same observation applies to
transductions computed by iterative arrays. So,
we obtain the following theorem.

\begin{theorem}\label{theo:notin-iatrt}
Let $t_o:\mathbb{N}\to\mathbb{N}$, $n\leq t_o(n)$, be a time complexity.
Then there exists a language belonging to $\tfam(\cat_{rt,rt})$, but not
to $\tfam(\iat_{rt,t_o})$.
\end{theorem}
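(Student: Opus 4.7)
The plan is to exploit the fact that the output-time bound plays no role in the first-component projection of an iterative array transduction. For any $\iat_{rt,t_o}$ transducer $M'$, the projection $L(M')$ is accepted by the underlying iterative array in real time, since the acceptance condition depends only on $t_i=rt$ and on the state transitions, not on $t_o$ or on the output emitted. Hence the first-component projection of every transduction in $\tfam(\iat_{rt,t_o})$ is a language accepted by some iterative array in real time, regardless of how $t_o$ is chosen.

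My strategy is to start from a language $L$ that is accepted by some cellular automaton in real time but by no iterative array in real time. Such an $L$ exists by the classical strict separation between the real-time acceptance classes of these two models, which is exactly the separation the preceding paragraph of the paper alludes to when it notes that the same projection argument applies to iterative arrays. Fix a real-time cellular automaton acceptor $M_L$ for $L$. I would then take the identity-on-$L$ transduction $T=\{\,(w,w)\mid w\in L\,\}$ and realize it by a $\cat_{rt,rt}$ on two tracks: the first track simulates $M_L$ verbatim, so that the leftmost cell enters an accepting state within $|w|$ steps exactly when $w\in L$; the second track simply has every cell copy its own input symbol into its output register during the very first computation step, so that all output registers are filled by time $1\leq |w|$. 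This places $T$ in $\tfam(\cat_{rt,rt})$, with the output being $w$ itself whenever $M_L$ accepts.

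To finish, I would argue by contradiction: if $T\in\tfam(\iat_{rt,t_o})$ via some transducer $M'$, then by the opening observation $L(M')=L(T)=L$ would be accepted by an iterative array in real time, contradicting the choice of $L$. The only non-routine ingredient is the real-time acceptance separation between iterative arrays and cellular automata, which I would invoke as a black box by citing the survey already referenced in the paper; everything else reduces to the trivial two-track construction and the routine projection argument.
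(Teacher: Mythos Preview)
Your proposal is correct and follows essentially the same approach as the paper: both reduce the question to the known strict separation between real-time $\ca$ and real-time $\ia$ acceptors via the projection argument, and then pair the separating language with a trivially computable output. The only cosmetic differences are that the paper names a concrete witness language (a linear context-free language known not to lie in real-time $\ia$) and chooses the constant output $a^{|w|}$ rather than the identity output $w$; neither choice affects the argument.
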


\begin{proof}
The language 
$$
L=\{\,\kand x_k \kand \cdots\kand x_1 \dollar y_1\kand \cdots\kand
y_k\kand \mid k\geq 1, x_i^R=y_iz_i \mbox{ and } x_i, y_i, z_i \in \{a,b\}^*\,\}
$$
is not accepted by
any real-time~$\ia$~\cite{kutrib:2009:calt}.
However, it is linear context free. Since all linear
context-free languages are accepted by one-way cellular automata
in real time~\cite{Smith:1970:CAFL}, the transduction
\mbox{$\{\,(w, a^{|w|}) \mid w\in L\,\}$} is a witness for the assertion.
\end{proof}

The previous theorem shows that there are transductions which cannot be computed by 
any iterative array that has to accept the input in real time. In fact,
the limitation arises from the limitation to accept languages. This raises
the question whether there are witness transductions whose 
projections on the first components are accepted by real-time iterative
arrays, that is, the limitation is a limitation to transform the input
in time. The next example answers the question for real time in the affirmative.

\begin{example}\label{exa:mirror-language}
In~\cite{kutrib:2010:tcia:proc} it has been shown that the
transduction $\{\, (w, w^R)\mid w\in\{a,b\}^*\,\}$ does not
belong to the family $\tfam(\iatrtrt)$. However, it can be computed
by a $\cat_{rt,rt}$ $M$
as follows. Transducer $M$ performs two tasks on different tracks
in parallel. The first one is to synchronize the cells in
real time as has been shown in Example~\ref{exa:sort}. 
The second task is to reverse the input in real time. So,
when the cells fire they emit their current input symbol
in order to complete the transduction.

The second task is computed by a cellular automaton 
$M'=\langle S,F,A,B,\border,\delta\rangle$ 
that itself uses two tracks which are implemented by the state set
$S= (A\cup\{\lambda\})^2$ (see~Figure~\ref{fig:pic-mirr}).
Let $(p_1,q_1)$, $(p_2,q_2)$, and $(p_3,q_3)$ be arbitrary states
from $S$. Then
$$
\begin{array}{rcl}
\delta(\border,(p_1,q_1),\border) &=& (q_1,p_1),\\
\delta(\border,(p_1,q_1),(p_2,q_2)) &=& (p_2,p_1),\\
\delta((p_1,q_1),(p_2,q_2),\border) &=& (q_2,q_1), \text{ and}\\
\delta((p_1,q_1),(p_2,q_2),(p_3,q_3)) &=& (p_3,q_1)
\end{array}
$$
shift the contents of the upper track to the left and
the contents of the lower track to the right. Symbols arriving at the left end
are copied to the lower track, and symbols arriving at the right
end are copied to the upper track. In this way the input
circulates. If $M'$ is started with $w\in A^+$ on its
upper track and empty lower track, then the reversal of
$w$ is written on the lower track after in $|w|$ time
steps, that is, when the FSSP of the first task fires.
\qed
\end{example}

\begin{figure}[!ht]
\centering
\includegraphics[scale=.8]{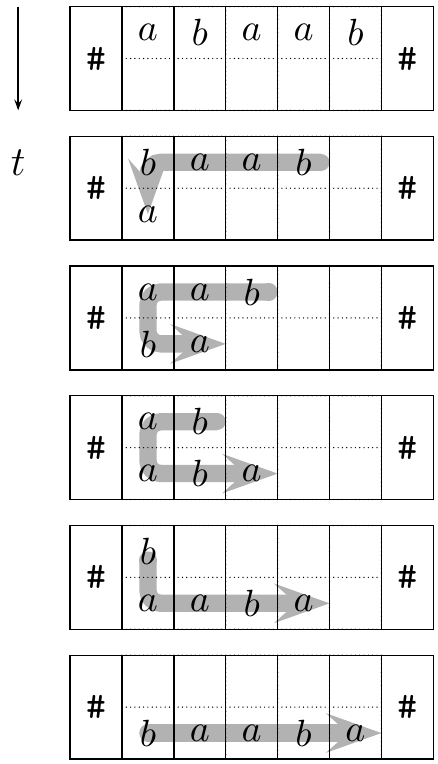}
\caption{Space-time diagram of a two-way cellular automaton reversing its input.}
\label{fig:pic-mirr}
\end{figure}

Theorem~\ref{theo:notin-iatrt} left open whether there is a proper inclusion
between the transduction families or whether they are incomparable.
Though real-time two-way cellular automata accept a strictly larger family
of languages than real-time iterative arrays, we cannot conclude
that there is an inclusion between the transduction families.
The reason is that iterative arrays receive their input sequentially 
to the communication cell \emph{and} emit the output sequentially
by the communication cell as well. So, when the last input symbol
is read by the leftmost $\iat$ cell, the last output is emitted 
also by the leftmost cell. However, for $\cat$ this last output
has to be at the right of the remaining output. Therefore, the
usual simulation of an $\ia$ by a $\ca$ where the leftmost $\ca$ cell
simulates the communication cell and the input is successively
shifted to the left does not work. Nevertheless, the next result
shows that the simulation is possible as long as `fair' time complexities
are considered. Clearly, an iterative array transducer can emit
an output up to the time complexity $t_o$, while in a cellular automaton
transducer each cell can emit only one output. So,
any time complexity $t_o$ larger than linear time  
yields a trivial transduction computed by an $\iat$ but
not by any $\cat$ and, from this point of view,
is `unfair'.

\begin{theorem}\label{theo:iat-by-cat-simulation}
Let $t_i,t_o:\mathbb{N}\to\mathbb{N}$ be two mappings so that
$t_i(n)\leq k_1\cdot n$, $t_o(n)\leq k_2\cdot n$, for two constants
$1\leq k_1,k_2$.
Then any transduction belonging to $\tfam(\iat_{t_i,t_o})$ is computed
by some $\cat_{t_i,t_o}$.
\end{theorem}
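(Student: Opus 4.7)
The plan is to construct a $\cat$ $M'$ with $n$ cells that simulates the given $\iat$ $M$ on input $w=a_1a_2\cdots a_n$, using several tracks per cell. The leftmost cell of $M'$ plays the role of $M$'s communication cell. On an \emph{input-shift} track, cell $i$ initially holds $a_i$ and the whole track shifts one position to the left at every step (symbols shifted past cell $1$ are discarded; vacated positions on the right become $\rightend$). Hence at time $t$ the leftmost cell carries $a_{t+1}$ for $t<n$ and $\rightend$ afterwards --- exactly the symbol that $M$'s communication cell reads at its $t$-th step.

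On a separate \emph{state} track, the full configuration of $M$ is simulated step-for-step. Because information in $M$ originates at the communication cell and propagates at speed one, at most $t_o(n)+1\le k_2n+1$ of $M$'s cells can ever become non-quiescent during the whole run. Packing a fixed block of $\lceil k_2\rceil+1$ consecutive $M$-cells into each $M'$-cell --- a standard technique in which intra-block transitions are evaluated locally inside one $M'$-cell and inter-block exchanges use the $\cat$'s neighbor links --- fits the active part of $M$ into the $n$ available cells and realizes one $M$-step per $M'$-step. Cell $1$ of $M'$ thus carries cell $1$ of $M$ and enters an accepting state no later than time $t_i(n)$ whenever $M$ accepts $w$.

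The output is handled on a third track. The leftmost cell collects the symbols emitted by the simulated $M$ in a local bucket of constant capacity determined by $k_2$ and by the maximal length $c$ of any string in the range of $\delta_0$. Once its bucket is full, further symbols spill rightward in a finite-state bucket-brigade: each cell absorbs what its own bucket can still hold and forwards the rest. A firing-squad signal, launched on yet another track as soon as cell $1$ detects that $M$ has halted, instructs every cell at time $t_o(n)$ to commit its current bucket contents as its emitted output; cells whose buckets are still empty emit $\lambda$. Since $|M(w)|\le c\cdot t_o(n)\le ck_2n$, at most $n$ buckets ever receive symbols, and their concatenation in cell order reproduces $M(w)$ exactly.

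The main technical obstacle will be the timing of the bucket-brigade: a naive pipeline with a one-step-per-hop observation delay would accumulate $\Theta(n)$ extra steps and overshoot the budget $t_o(n)\le k_2n$. The fix is to calibrate the bucket capacity and the forwarding discipline so that cell $i+1$ starts to absorb one step before cell $i$'s bucket becomes strictly full, keeping the pipeline flush with the worst-case output rate of $c$ symbols per step. The remaining ingredients --- correctness of the packed $M$-simulation, boundary transitions at the two ends of the $M'$-array, and the degenerate input lengths $n=1$ and $n=2$ --- are routine verifications over the finite track product.
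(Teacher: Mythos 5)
Your core mechanism --- keeping the simulated communication cell permanently at the leftmost position and shipping its output rightward in a bucket-brigade --- cannot meet the required time bound $t_o$, and the obstacle you yourself flag as ``the main technical'' one is not a pipelining inefficiency that calibration can remove but an information-propagation lower bound. In the worst case the communication cell of $M$ emits output at every step, so to fit $c\cdot t_o(n)$ symbols into $n$ constant-size buckets the symbols produced at $\iat$-time $t$ must end up in a cell at position $\Theta(t/k_2)$. A symbol created in cell $1$ at time $t$ cannot be present in cell $j$ before time $t+j-1$, whatever forwarding discipline you choose; hence the output produced around time $t_o(n)$ reaches its target cell only around time $(1+1/k_2)\,t_o(n)$, i.e.\ $\Theta(n)$ too late, and since each output register is written exactly once the target cell cannot commit before its last symbol arrives. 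For $(rt,rt)$ your $M'$ finishes at time about $2n$ rather than $n$; this is precisely the failure mode the paper points out just before the theorem (``the usual simulation of an $\ia$ by a $\ca$ where the leftmost $\ca$ cell simulates the communication cell and the input is successively shifted to the left does not work''), and a $2n$ bound would not suffice to derive Corollary~\ref{cor:incl-rt}.

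The missing idea is to let the simulation of the communication cell migrate rightward instead of transporting the output. In the paper's construction cell $i$ of the $\cat$ simulates $\iat$-steps $(i-1)k_2+1,\dots,ik_2$ of the communication cell (the part of the $\iat$ configuration built so far is folded into two state registers per cell to the left of the current simulation site, and the input track is shifted left at the appropriate rate so that the correct input symbol is always locally available), and cell $i$ emits, at the very moment it finishes those $k_2$ steps, the concatenation of the $k_2$ outputs just produced; cells to the right of the last simulation site emit $\lambda$. The output is thus generated in place, already in the correct left-to-right order, with no transport, and the transduction completes by time $t_o(n)$. Your acceptance track and the grouping argument bounding the number of active $\iat$-cells are essentially fine (though the block size should involve $\max\{k_1,k_2\}$, since $t_i$ may exceed $t_o$), but the output mechanism has to be replaced along these lines.
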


\begin{proof}
Let $M=\langle S, F, A, B, \rightend, s_0, \delta,\delta_0 \rangle$
be an $\iat_{t_i,t_o}$. By standard techniques $M$ can be modified such that
a cell never reenters the quiescent state after having left it, and
that never more than $n$ cells are non-quiescent on inputs of length $n$
until the transduction is completed. The former property can be achieved 
by introducing a new state to which non-quiescent cells change instead
of the quiescent state. The latter property is obtained by grouping
$\max\{k_1,k_2\}$ cells into one.

As mentioned above, the transduction computed by~$M$ can be divided 
into two tasks running on different tracks. Since any language accepted 
by a \mbox{$t_i$-time} iterative array is known
to be accepted by a $t_i$-time two-way cellular automaton as well,
it remains to be shown how to simulate the transformation of the
input into the output by a $\cat$ $M'=\langle S',F',A,B,\border,\delta'\rangle$.

Assume for a moment that $k_2=1$, that is, $t_o$ is real time.
Basically, the idea of the simulation (of the second task) is as follows
(see~Figures~\ref{fig:pic-regs} and~\ref{fig:pic-tssim}).
Every cell of $M'$ has five registers. In the first and second register,
cells of~$M$ are simulated. So, they initially carry the quiescent state of $M$. 
At the beginning, the leftmost cell of~$M'$ simulates the communication
cell of $M$ for two time steps. Then the second cell of $M'$ simulates the
communication cell for another two time steps, and so on. When cell $i$
simulates the communication cell, then the concatenation of the
first two registers of cells $i, i-1,\dots, 1$ represent the states of the
cells $1,2,\dots, 2i-1$ or $1,2,\dots, 2i$ of $M$. In order to provide
the necessary input symbols for the simulation of the
communication cell, on the third track $M'$ shifts its input to the left
at every other time step. To this end, a modulo two counter is maintained
in the fourth registers. When the end-of-input symbol meets
the simulation of the communication cell in cell $\lceil n/2\rceil$,
the simulation of $M$ is completed.
To conclude the idea of the construction the output has to be described.
In the right half of the automaton, each cell passed through by the
end-of-input symbol emits $\lambda$. In the left half, each cell
emits its output when it has simulated two steps of the communication
cell. The output is the concatenation of the two outputs generated
by the communication cell. To this end, the first output has to be
remembered for one time step in the fifth register. Dependent on the parity
of the length of the input, the last output possibly has to be emitted by a
cell having simulated only one step of the communication cell. 
So, the output of the $\iat$ is simulated in the left half of the $\cat$
while the right half actually emits the empty word.

Formally, the construction is as follows 
(see~Figures~\ref{fig:pic-regs} and~\ref{fig:pic-tssim}).
$$
S'= S\times S \times (A\cup\{\rightend\})\times\{0,1\}\times B^{\leq j},
$$
where $j$ is the length of the longest output emitted in one step by the
communication cell of $M$, and $\rightend$ is the end-of-input symbol of the $\iat$. 
On input $w=a_1a_2\cdots a_n \in A^n$, 
cell $i$ is initially in state $(s_0, s_0, a_i, 0, \lambda)$.

\begin{figure}[!ht]
\centering
\includegraphics[scale=.9]{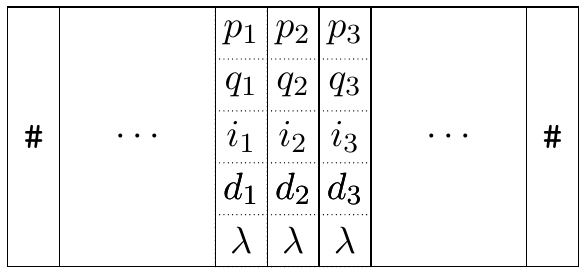}
\caption{Structure of $\cat$ registers and denotation of their contents.
Here all fifth registers are empty.} 
\label{fig:pic-regs}
\end{figure}

\begin{figure}[!ht]
\centering
\includegraphics[scale=.9]{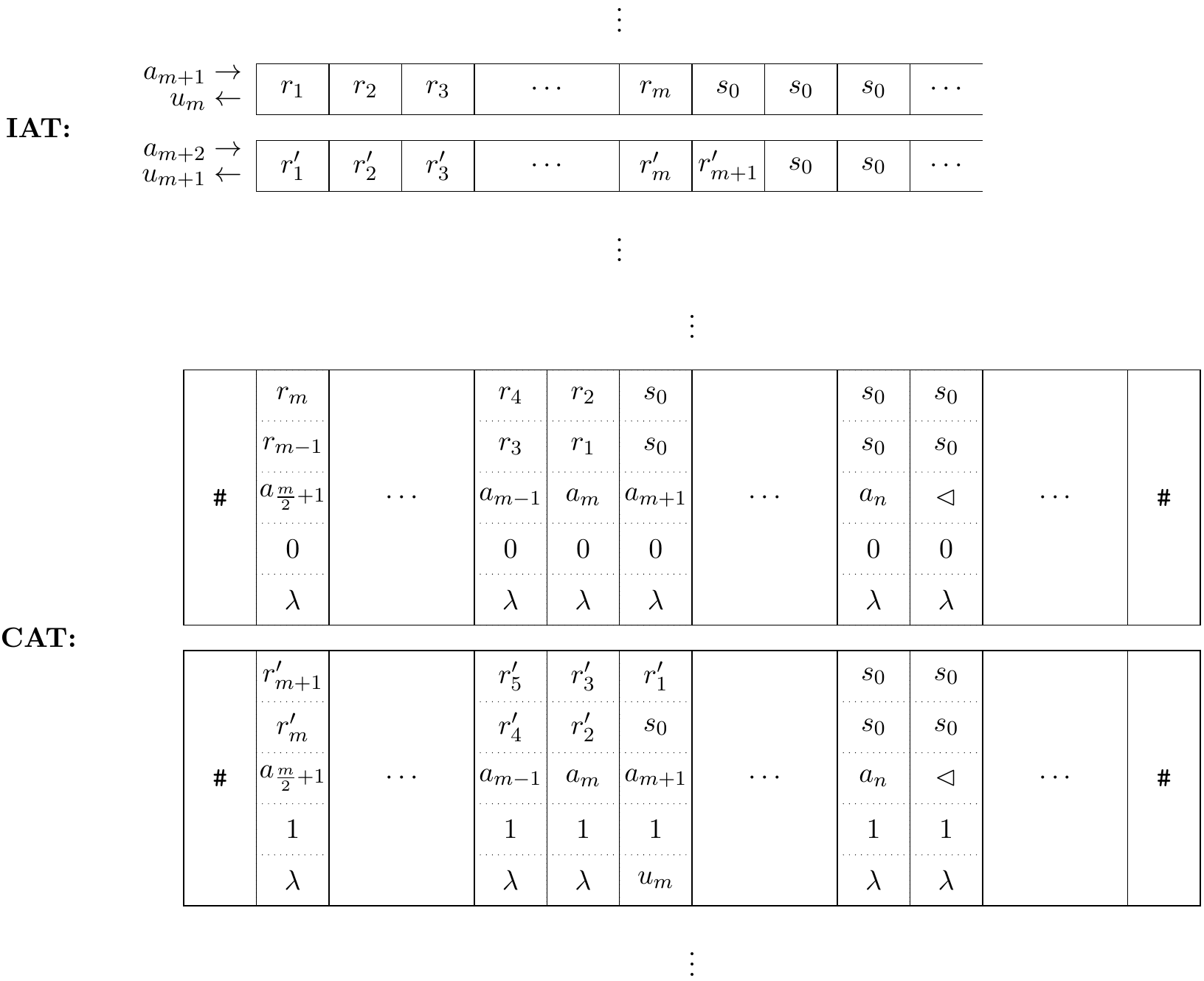}
\caption{Principle of a $\cat$ simulating an $\iat$. The input 
is $a_1a_2\cdots a_n$ and $m$ is even. Depicted are the two
consecutive configurations at time steps $m$ and $m+1$
for the $\iat$ and for the $\cat$.} 
\label{fig:pic-tssim}
\end{figure}

A cell that detects that its left neighbor has just filled the first
two registers, starts to simulate the communication cell
for two time steps. Similarly, so does the leftmost cell
at initial time. As before, $\delta_{0,s}$~($\delta_s$) denotes
the first component of the value of $\delta_{0}$ ($\delta$), 
while $\delta_{0,o}$ ($\delta_o$) denotes the second
component. So, for $p_1,p_2,q_1 \neq s_0$,
we define
\begin{equation*}
\begin{split}
\delta'(\border, (s_0, s_0, i_2, 0, \lambda), (s_0, s_0, i_3, 0, \lambda))
 &= ((\delta_{0,s}(i_2, s_0, s_0), s_0, i_2, 1,\delta_{0,o}(i_2, s_0, s_0)), \lambda),\\
\delta'(\border, (p_2, s_0, i_2, 1, \beta), (s_0, s_0, i_3, 1, \lambda))
 &=\\
 ((\delta_{s}(p_2, s_0, s_0), & \delta_{0,s}(i_3, p_2, s_0), i_3,0,\lambda), 
                                           \beta\delta_{0,o}(i_3, p_2, s_0)),\\[2mm]
\delta'((p_1,q_1,i_1,0,\lambda), (s_0, s_0, i_2, 0, \lambda), (s_0, s_0, i_3, 0, \lambda))
 &= ((\delta_{0,s}(i_2, q_1, p_1), s_0, i_2, 1,\delta_{0,o}(i_2, q_1, p_1)), \lambda),\\
\delta'((p_1,q_1,i_1,1,\lambda), (p_2, s_0, i_2, 1, \beta), (s_0, s_0, i_3, 1, \lambda))
 &=\\
 ((\delta_{s}(p_2, q_1, p_1), & \delta_{0,s}(i_3, p_2, q_1), i_3,0,\lambda), 
                                              \beta\delta_{0,o}(i_3, p_2,
                                              q_1)).
\end{split}
\end{equation*}
A cell that already has simulated two steps of the communication cell
continues to simulate cells of the $\iat$ by applying the following
transitions, where $p_1,q_1, p_2,q_2 \neq s_0$:
\begin{equation*}
\begin{split}
\delta'(\border, (p_2, q_2, i_2, d_2, \lambda), (p_3, q_3, i_3, d_3, \lambda))
 &=\\
 ((\delta_{s}(p_2, s_0, s_0), & \delta_s(q_2,p_2,s_0), i_{2+d_2}, 1-d_2,\lambda), \lambda),\\
\delta'((p_1, q_1, i_1, d_1, \lambda), (p_2, q_2, i_2, d_2, \lambda), (p_3, q_3, i_3, d_3, \lambda))
 &=\\
 ((\delta_{s}(p_2, q_1, p_1), & \delta_s(q_2,p_2,q_1), i_{2+d_2}, 1-d_2,\lambda), \lambda).
\end{split}
\end{equation*}
Finally, the cells that did not simulate a step of the
communication cell behave according to the following transitions, where
$p_1\ne s_0$:
\begin{align*}
\delta'((p_1, s_0, i_1, 1, \lambda), (s_0, s_0, i_2, 1, \lambda), (s_0, s_0, i_3, 1, \lambda))
 &= ((s_0, s_0, i_3, 0,\lambda), \lambda),\\
\delta'((s_0, s_0, i_1, 0, \lambda), (s_0, s_0, i_2, 0, \lambda), (s_0, s_0, i_3, 0, \lambda))
 &= ((s_0, s_0, i_2, 1,\lambda), \lambda),\\[2mm]
\delta'((p_1, s_0, i_1, 1, \lambda), (s_0, s_0, i_2, 1, \lambda), \border)
 &= ((s_0, s_0, \rightend, 0,\lambda), \lambda),\\
\delta'((s_0, s_0, i_1, 0, \lambda), (s_0, s_0, i_2, 0, \lambda), \border)
 &= ((s_0, s_0, i_2, 1,\lambda), \lambda).
\end{align*}
This concludes the construction for the assumption $k_2=1$, that is,
for $t_o(n)=n$. Now let constant $k_2$ be at least two. In this case, the simulation
is slightly modified as follows. Each cell simulates successively~$k_2$ steps of the
communication cell. In order to provide the correct input symbols the input
is shifted to the left $k_2-1$ times within $k_2$ steps. To this end,
a modulo $k_2$ counter is maintained. The simulation is completed when
the rightmost cell (cell $n$) of the $\cat$ has finished to simulate the communication
cell. Clearly, at that time $k_2 \cdot n$ steps have been simulated.
Similar to the construction above, a cell emits its output after having
finished to simulate the communication cell, and the output is the
concatenation of the outputs computed during these $k_2$ steps.
\end{proof}

So, for the time complexities real time and linear time, the parallel input/output mode
is not weaker than the sequential one. 
In fact, Theorems~\ref{theo:notin-iatrt} and~\ref{theo:iat-by-cat-simulation}
imply that the former is strictly stronger for $(rt,rt)$ and $(rt,lt)$:

\begin{corollary}\label{cor:incl-rt}
The family $\tfam(\iatrtrt)$ is strictly included in $\tfam(\catrtrt)$,
and $\tfam(\iatrtlt)$ is strictly included in $\tfam(\catrtlt)$.
\end{corollary}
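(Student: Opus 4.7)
The plan is to derive the corollary as a direct consequence of Theorems~\ref{theo:notin-iatrt} and~\ref{theo:iat-by-cat-simulation}, with very little extra work. The two statements decompose cleanly into an inclusion part and a strictness part.

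For the inclusions, I would apply Theorem~\ref{theo:iat-by-cat-simulation} twice. Taking $t_i(n)=t_o(n)=n+1$ (that is, $k_1=k_2=1$) yields $\tfam(\iatrtrt)\subseteq\tfam(\catrtrt)$, because the $\cat$ produced by the construction has input and output time complexities bounded by the same functions, which are by definition real time. Taking $t_i(n)=n+1$ and $t_o(n)=k\cdot n$ for an arbitrary $k\geq 1$ similarly gives $\tfam(\iat_{rt,k\cdot n})\subseteq\tfam(\cat_{rt,k\cdot n})$; taking the union over $k$ yields $\tfam(\iatrtlt)\subseteq\tfam(\catrtlt)$.

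For the strictness, I would exhibit a single witness transduction that lies outside all the $\iat$ families considered. The transduction $\tau=\{\,(w,a^{|w|})\mid w\in L\,\}$ from the proof of Theorem~\ref{theo:notin-iatrt} is already known to belong to $\tfam(\catrtrt)$, and trivially also to $\tfam(\catrtlt)$ since $\tfam(\catrtrt)\subseteq\tfam(\catrtlt)$. On the other hand, $\tau$ cannot belong to $\tfam(\iat_{rt,t_o})$ for any $t_o$ with $n\leq t_o(n)$, because its first-component projection is $L$, which is not accepted by any real-time $\ia$. Specialising $t_o$ to $n+1$ rules $\tau$ out of $\tfam(\iatrtrt)$, and applying the same argument for every linear function $t_o(n)=k\cdot n$ rules it out of $\tfam(\iatrtlt)$ as well.

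Combining the inclusion and the exclusion in each case gives the two strict inclusions claimed. There is no real obstacle here: the only subtlety worth a sentence is the observation that the witness $\tau$ serves both strictness claims simultaneously, since Theorem~\ref{theo:notin-iatrt} is parametric in $t_o$ and the non-membership is forced already at the acceptance level, independent of the particular output time complexity.
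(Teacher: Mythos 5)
Your proposal is correct and takes essentially the same route the paper intends: the corollary appears there without a written proof, asserted as an immediate consequence of Theorem~\ref{theo:notin-iatrt} (whose witness, being parametric in $t_o$, rules out membership in both $\tfam(\iatrtrt)$ and $\tfam(\iatrtlt)$ at once) and Theorem~\ref{theo:iat-by-cat-simulation} (which gives the two inclusions), exactly as you argue. The only nit is that for $t_i(n)=t_o(n)=n+1$ you need $k_1=k_2=2$ rather than $1$ in the hypothesis of Theorem~\ref{theo:iat-by-cat-simulation}, since $n+1\not\le n$; this is immaterial to the argument.
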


Since the families of languages accepted by two-way
cellular automata and iterative arrays in linear time are known to be identical,
the questions for the precise relations between 
$\tfam(\iatltrt)$ and $\tfam(\catltrt)$ or between $\tfam(\iatltlt)$ and
$\tfam(\catltlt)$ raise immediately.

\begin{proposition}
The family $\tfam(\iatltrt)$ is strictly included in $\tfam(\catltrt)$.
\end{proposition}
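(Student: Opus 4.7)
The plan is to establish inclusion and strictness separately. The inclusion $\tfam(\iatltrt) \subseteq \tfam(\catltrt)$ is immediate from Theorem~\ref{theo:iat-by-cat-simulation}: both $t_i = lt$ (so $t_i(n) \le k_1 n$) and $t_o = rt$ (so $t_o(n) \le n$) satisfy the `fair' hypothesis, hence every $\iatltrt$ transduction is computed by some $\catltrt$.

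For strictness, I would take the reversal transduction $T = \{\,(w, w^R) \mid w \in \{a,b\}^+\,\}$. By Example~\ref{exa:mirror-language} it belongs to $\tfam(\catrtrt) \subseteq \tfam(\catltrt)$, so it suffices to prove $T \notin \tfam(\iatltrt)$. Suppose some $\iat_{lt, rt}$ $M$ computes $T$, and let $c$ bound the length of any output string appearing in the range of the communication cell's transition function. Write $a_1 \cdots a_n$ for the input and $\beta_t$ for the (possibly empty) string $M$ emits at step $t$; since the output is complete by step $n+1$, we have $\beta_1 \beta_2 \cdots \beta_{n+1} = w^R = a_n a_{n-1} \cdots a_1$.

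The crux is a dependency analysis: the communication cell's state at time $t-1$ depends only on $a_1, \ldots, a_{t-1}$, and its right neighbor's state at time $t-1$ depends only on $a_1, \ldots, a_{t-2}$, because information from $a_j$ needs one step to reach cell~$1$. Thus $\beta_t$ depends only on $a_1, \ldots, a_t$ for $t \le n$. Consequently, the prefix $\alpha = \beta_1 \cdots \beta_{n-1}$ of the total output depends only on $a_1, \ldots, a_{n-1}$. Comparing the two inputs $a_1 \cdots a_{n-1} a$ and $a_1 \cdots a_{n-1} b$ shows that $\alpha$ must simultaneously be a prefix of $a \cdot a_{n-1} \cdots a_1$ and of $b \cdot a_{n-1} \cdots a_1$, forcing $\alpha = \lambda$. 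But then the full output, of length $n$, is produced during steps $n$ and $n+1$ only, which yields $n \le 2c$; choosing $n > 2c$ gives a contradiction, so $T \notin \tfam(\iatltrt)$. The main (mild) delicacy is the dependency lemma on emitted symbols; once it is in place the witness argument is short, and it actually rules out $T$ from $\tfam(\iat_{t_i, rt})$ for every $t_i$, so the specific choice $t_i = lt$ plays no special role.
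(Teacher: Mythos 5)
Your proof is correct, and the inclusion half is exactly the paper's (both invoke Theorem~\ref{theo:iat-by-cat-simulation}). For strictness you use the same witness, the reversal transduction, but you argue differently. The paper reduces to a known fact: since the domain $\{a,b\}^*$ is trivially acceptable in real time, an $\iatltrt$ for $\{(w,w^R)\}$ could be converted into an $\iatrtrt$ for it, contradicting the result of Example~\ref{exa:mirror-language} (i.e., the separation already established in the iterative-array paper). You instead give a direct, self-contained impossibility argument: the output emitted by the communication cell during the first $n-1$ steps is determined by $a_1\cdots a_{n-1}$ alone (input enters only at cell~$1$, and cell~$2$'s state at time $t-1$ lags one more step), so by comparing inputs differing in the last letter this prefix must be empty, leaving at most $2c$ output symbols for the last two steps --- a contradiction for $n>2c$. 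Your dependency analysis is sound, and it even yields the slightly stronger statement that the reversal transduction lies outside $\tfam(\iat_{t_i,rt})$ for \emph{every} $t_i$, which subsumes the paper's reduction step and reproves the cited $\iatrtrt$ lower bound rather than quoting it. The trade-off is the usual one: the paper's route is two lines given the earlier literature, while yours is longer but independent of it; both are valid, and your version makes the paper's implicit ``acceptance and transformation are separable tasks'' reasoning unnecessary at this point.
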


\begin{proof}
The inclusion $\tfam(\iatltrt)\subseteq \tfam(\catltrt)$ follows again by 
Theorem~\ref{theo:iat-by-cat-simulation}.
Moreover, if the transduction $\{\, (w, w^R)\mid w\in\{a,b\}^*\,\}$
would be computable by some $\iatltrt$, then it would be computed
by an $\iatrtrt$, since the trivial input to be accepted is
$\{a,b\}^*$. However, by Example~\ref{exa:mirror-language}
this language separates the families $\tfam(\iatrtrt)$
and $\tfam(\catrtrt)$ and, thus, it separates
$\tfam(\iatltrt)$ and $\tfam(\catltrt)$.
\end{proof}

For the last time complexity in question $(lt,lt)$ we obtain a
different situation. The parallel and sequential input/output
modes are equally powerful. 

\begin{theorem}\label{theo:ltlt-equal}
The families $\tfam(\iatltlt)$ and $\tfam(\catltlt)$ are identical.
\end{theorem}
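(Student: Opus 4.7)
The forward inclusion $\tfam(\iatltlt) \subseteq \tfam(\catltlt)$ is immediate from Theorem~\ref{theo:iat-by-cat-simulation} applied with the linear-time constants. For the reverse inclusion I plan to simulate any given $M \in \catltlt$ (with time bound $k\cdot n$) by an $\iatltlt$ $M'$. Acceptance is handled on a dedicated track via the well-known equivalence of linear-time two-way cellular automata and linear-time iterative array acceptors, so the focus is on reproducing the output $v = o(1)o(2)\cdots o(n)$ at the communication cell within a linear number of steps.

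The construction would use three overlapping phases on separate tracks. First, the input is distributed: $M'$ reads $a_t$ at time $t$ and forwards it rightward on a loading track so that $a_t$ settles at cell $t$; a standard signal-based layout completes this within $O(n)$ steps. Second, on a parallel state track, IAT cell $i$ simulates CAT cell $i$; as soon as cell $i$ and its neighbors have been loaded (or see $\border$), it performs one $\delta$-step per IAT step. Since $M$ is linear-time, this phase completes within $O(n)$ further steps, leaving the bounded-length output string $o(i)$ of the simulated CAT cell stored at IAT cell $i$. Third, the outputs are emitted in order by a leftward shift: at every step each cell replaces its output register with that of its right neighbor, while the communication cell emits its current register contents. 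Because each $o(i)$ has length bounded by a constant $j$, this fits the per-step output budget of an $\iat$, and after $n$ shifts the string $o(1)o(2)\cdots o(n)$ has been emitted in the correct order.

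The main obstacle is coordinating the interleaving of these phases: ensuring that the CAT simulation does not advance at cells whose neighborhoods have not yet been fully loaded, and that the leftward-shifting outputs in the emission phase do not collide with residual simulation data on other tracks. Both issues are dispatched using the standard signal-and-shift techniques already exploited in CA-by-IA language simulations; the genuinely new ingredient is merely the emission shift of bounded-length output strings, which is routine. Summing the three $O(n)$ phases yields a linear-time transduction, establishing $\tfam(\catltlt) \subseteq \tfam(\iatltlt)$ and therefore equality of the two families.
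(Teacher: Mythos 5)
Your outline---load the input into the array, simulate the $\cat$ cell-for-cell, then stream the stored outputs to the communication cell---is the same plan the paper follows, and the forward inclusion via Theorem~\ref{theo:iat-by-cat-simulation} is identical. The emission phase is also sound: each $o(i)$ has constant-bounded length, so a leftward shift with the communication cell emitting one register per step reproduces $o(1)o(2)\cdots o(n)$ in order within $n$ further steps.

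The genuine problem is your second phase as literally stated: a cell cannot ``perform one $\delta$-step per IAT step as soon as it and its neighbors have been loaded.'' In a \emph{two-way} cellular automaton, the state of cell $i$ after $s$ steps depends on the initial states of cells $i-s,\dots,i+s$; since the input is loaded left to right at finite speed, cell $i+1$ begins (and is loaded) strictly later than cell $i$, so if both advance at one simulated step per real step, cell $i$ permanently lacks the time-$(s-1)$ state of cell $i+1$ that it needs for its own step $s$. This is not a bookkeeping nuisance but a dependency violation: the wavefront scheme you invoke works for one-way CA, while for two-way CA you must either slow the skewed simulation to one $\cat$ step per two $\iat$ steps, or do what the paper does---after loading, run an FSSP from the communication cell to synchronize all $n$ cells in $2n-2$ steps and only then start the simulation in lockstep. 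Either repair keeps the total time linear, so the theorem is not in danger, but the step as you wrote it fails and the fix is precisely the ingredient you deferred to ``standard techniques'' without naming it.
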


\begin{proof}
The inclusion $\tfam(\iatltlt)\subseteq \tfam(\catltlt)$ follows once more by 
Theorem~\ref{theo:iat-by-cat-simulation}.

Conversely, an $\iatltlt$ can simulate a $\catltlt$
as follows. In a first phase, it reads 
the input and stores it successively in its cells. 
In a second phase, the iterative array transducer starts a
FSSP in the communication cell, that synchronizes the $n$ cells 
within $2n-2$ time steps. Finally, all cells start the 
simulation of the $\cat$ at the same time. Clearly, the iterative 
array transducer obeys linear time bounds if the cellular automaton
transducer does.
\end{proof}

The previous result can be generalized to arbitrary time complexities
beyond linear time as long as the iterative arrays use linear space only.
For space complexities beyond linear space, clearly, iterative arrays
are stronger than cellular automata, since the latter are linearly 
space bounded by definition.  

\section{Comparison with Finite State Transducers and Pushdown Transducers}
\label{sec:comptransd}

Here, we turn to compare cellular automaton transducers with finite state transducers (FST)
and pushdown transducers (PDT). These devices are in essence finite automata and pushdown
automata, where each transition is associated with a possibly empty output
word (see~\cite{Aho:1972:tptcv1}).
In their most general form, FST and PDT are nondeterministic devices, that is, 
the partial transition function of an FST maps from 
\mbox{$S \times (A \cup \{\lambda\})$} into the finite subsets of
$S \times B^*$. As above, $S$ denotes the state set and $A$ the input alphabet.
The partial transition function of a PDT maps from 
$S \times (A \cup \{\lambda\}) \times G$ into the finite subsets of
$S \times B^* \times G^*$, where $G$ denotes the pushdown alphabet.
Since a nondeterministic transducer may transform an input into different 
outputs, which is impossible for deterministic $\cat$, in the sequel we only 
study deterministic, unambiguous, and single valued devices. 

An FST $M$ is called \emph{single valued} ($\sfst$) if for all $(w_1,v_1), (w_2,v_2) \in T(M)$
either $(w_1,v_1)= (w_2,v_2)$ or $w_1\neq w_2$. 
An $\sfst$ is said to be \emph{unambiguous} ($\ufst$) if for all $(w,v) \in T(M)$
there is a unique computation transforming $w$ into $v$. Finally, a $\ufst$ is \emph{deterministic} ($\dfst$) 
if any computation is deterministic.  
It has been shown in~\cite{Weber:1995:edsvt} that every single-valued finite state 
transducer can be simulated by an unambiguous one. 
Furthermore, it is known~(see, for example,~\cite{Weber:1995:edsvt}) that 
$$
\tfam(\dfst)\subset \tfam(\ufst)=\tfam(\sfst).
$$

The notions of single-valued PDT ($\spdt$), unambiguous PDT ($\updt$), 
and deterministic PDT ($\dpdt$) are defined analogously.
Additionally, a $\updt$ is called \emph{real-time deterministic} ($\dpdt_\lambda$)
if it is not allowed to move on empty input. The following proper hierarchy
is known: (see, for example,~\cite{kutrib:2010:tcia:proc})
$$
\tfam(\dpdt_\lambda)\subset \tfam(\dpdt)\subset \tfam(\updt)\subset
\tfam(\spdt).
$$

In~\cite{kutrib:2010:tcia:proc} it has been shown 
that any $\dfst$ can be simulated by some $\iatrtrt$,
and any $\sfst$ can be simulated by some $\iatrtlt$. Here, we prove that both devices can be
simulated by some $\cat$ as well. 
Interestingly, the device with parallel input/output mode can compute the
transductions fast, in particular in real time, 
which is in contrast to the devices with sequential input/output mode.

\begin{lemma}\label{lem:sfst}
The families $\tfam(\dfst)$ and $\tfam(\sfst)$ are strictly included in $\tfam(\catrtrt)$.
\end{lemma}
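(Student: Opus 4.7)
The plan is to prove $\tfam(\sfst)\subseteq\tfam(\catrtrt)$, which also covers the $\dfst$ case since $\tfam(\dfst)\subseteq\tfam(\sfst)$, and then to establish strictness via the reversal transduction. I would first pass from the given $\sfst$ to an equivalent $\ufst$ $M'$ by invoking Weber's result~\cite{Weber:1995:edsvt} cited in the paper, and normalize $M'$ to be $\lambda$-free, so that every accepting computation on an input of length $n$ consists of exactly $n$ transitions, each emitting a bounded output word.

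To simulate $M'$, I would design a $\catrtrt$ running three tracks in parallel. A left-to-right sweep computes the forward-reachable state sets $F_i=\delta(F_{i-1},a_i)$, so that cell~$i$ holds $F_i$ at time~$i$. A right-to-left sweep computes the backward sets $B_{i-1}=\{\,q\mid \exists q'\in B_i,\,(q,a_i,q',b)\in\delta\,\}$ with $B_n=F$, so that cell~$i$ holds $B_{i-1}$ and also sees $B_i$ at its right neighbour by time $n-i+1$. On the third track, cell~$i$ inspects $F_{i-1}$, $a_i$, and $B_i$ and, using unambiguity of $M'$---which forces both the intersection $F_{i-1}\cap B_{i-1}$ and the outgoing transition from its unique element ending inside $B_i$ to be singletons on any accepted input---identifies the unique transition $(q_{i-1},a_i,q_i,b_i)$ belonging to the unique accepting computation, and emits $b_i$ into its output register. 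Since cell~$i$ has all required data by time $\max(i,\,n-i+1)\leq n$, every cell emits within real time. Acceptance is declared by the leftmost cell at time~$n$ by checking $q_0\in B_0$, which is equivalent to the input lying in the domain of $M'$.

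For strictness, the reversal transduction $\{\,(w,w^R)\mid w\in\{a,b\}^+\,\}$ belongs to $\tfam(\catrtrt)$ by Example~\ref{exa:mirror-language}, but it cannot be realised by any $\sfst$: on any input of the form $uv$ the output $v^Ru^R$ begins with $v^R$, so after reading the prefix $u$ the transducer cannot yet have produced any non-empty output, forcing it to memorise $u$ verbatim in its finite control in order to emit $u^R$ at the end, which is impossible for arbitrarily long $u$. The main obstacle is the $\lambda$-free normalisation together with the verification that unambiguity collapses the local nondeterministic choices at each cell to a unique transition per position; once this is established, the forward/backward sweeps and output emission are routine.
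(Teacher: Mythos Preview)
Your argument is correct and the timing works out, but it follows a genuinely different construction from the paper's.

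Both proofs share the core ingredients: passing to an unambiguous $\lambda$-free transducer, and computing right-to-left ``suffix-accepting'' state sets (your $B_i$ are exactly the paper's $V_{i+1}$, recast from grammar nonterminals back to states). The paper, however, does \emph{not} run a distributed forward sweep. Instead it shifts the whole input to the left and simulates the powerset DFA in the leftmost cell alone to decide acceptance, while the backward sets are packed two-per-cell into the right half of the array using a right shift of the input; the unique accepting path is then extracted by a signal emanating from the centre that walks through the stored $V_i$'s, and the resulting productions $p_i$ are shipped leftwards so that $p_i$ reaches cell~$i$ exactly at time~$n$, at which point an FSSP fires and every cell emits simultaneously.

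Your two-sweep scheme is more symmetric and arguably cleaner: each cell~$i$ waits until both the forward set $F_{i-1}$ and the backward set $B_i$ are locally available, takes the unique state in $F_{i-1}\cap B_{i-1}$ (unambiguity forces this intersection to be a singleton on accepted inputs, and likewise the outgoing transition into $B_i$), and emits at time $\max(i,\,n-i+1)\le n$. No FSSP, no packing into half the array, no separate DFA simulation for acceptance (your test $s_0\in B_0$ in cell~$1$ at time~$n$ suffices). The price is that you must argue the singleton property carefully, which you do; the paper buys the same fact by appealing to the uniqueness of the derivation in the right-linear grammar. For strictness, both proofs use the reversal transduction of Example~\ref{exa:mirror-language}; your pigeonhole sketch is adequate at the level of rigour the paper itself adopts (``clearly cannot be computed by any finite state transducer'').
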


\begin{proof}
We consider the transduction $\{\,(w,w^R) \mid u \in \{a,b\}^*\,\}$ of
Example~\ref{exa:mirror-language} which belongs to $\tfam(\catrtrt)$,
but clearly cannot be computed by any finite state transducer.
Next, we describe how a $\catrtrt$ can simulate an $\sfst$. Trivially, 
this construction applies to $\dfst$ as well. The idea of the simulation 
is similar to the construction for $\iatrtlt$ given
in~\cite{kutrib:2010:tcia:proc}. However, here we can reduce the time
complexity and, thus, have to cope with the problem of speeding up the
computation to real time. 

Let $M=\langle S, F, A, B, s_0, \delta \rangle$ be an unambiguous $\sfst$. 
Due to a result in~\cite{Weber:1995:edsvt} we may assume without loss
of generality that $M$ does not move on empty input. 
First, from $M$ a nondeterministic finite automaton
$M_{NFA}=\langle S, F, A, s_0, \delta'\rangle$ is extracted that
accepts $L(M)$. Then, automaton $M_{NFA}$ is converted into an equivalent
deterministic finite automaton $M_{DFA}$ by the powerset construction.

Now we turn to the construction of a $\catrtrt$ $M'$ which simulates $M$.
Transducer $M'$ has several tracks. The input is stored in the
first register and, additionally, its second register is used 
to shift the input to the left in every time step.
In the third register of the leftmost cell the deterministic finite automaton 
$M_{DFA}$ is simulated which receives its input on the second track.
Now $M'$ accepts if and only if $M_{DFA}$ accepts the input $w$ at time $|w|$.

The second task of $M'$ is to compute the output. 
For this purpose, the unique accepting computation of $M_{NFA}$ has to be identified
among all computations on $w$. As a first step, automaton $M_{NFA}$ is converted
into an equivalent right linear grammar $G_{NFA}$ with axiom $X$.
The productions of $G_{NFA}$ have three different forms:
\begin{enumerate}
\item 
$X \rightarrow a[q']$ for all transitions $q' \in \delta'(s_0,a)$
with $a \in A$,
\item 
$[q] \rightarrow a[q']$ for all transitions $q' \in \delta'(q,a)$ 
with $q \in S, a \in A$,
\item 
$[q] \rightarrow a$ for all transitions $q' \in \delta'(q,a)$
with $q \in S$, $q' \in F$, and $a \in A$.
\end{enumerate}
So, every production in $G_{NFA}$ corresponds to a transition rule in
$M_{NFA}$ and $M$ and, thus, corresponds to an output $u \in B^*$.

Let $w=a_1a_2 \cdots a_n$. We consider sets $V_1, V_2, \ldots, V_n$
of nonterminals from $G_{NFA}$ so that $Y \in V_i$, if and only if there is a derivation
$Y \Rightarrow^* a_{i} a_{i+1} \cdots a_{n}$ in $G_{NFA}$. 
Set $V_n$ includes exactly all nonterminals $Y$ for which the production 
$Y \rightarrow a_n$ belongs to $G_{NFA}$. In general, for $1 \le i <n$, the set 
$V_i$ includes exactly all nonterminals $Y$ for which the production 
$Y \rightarrow a_iZ$ belongs to $G_{NFA}$ and $Z\in V_{i+1}$.
Clearly, set $V_i$ can be computed from $a_i$ and $V_{i+1}$.

Let us assume for a moment that $n$ is even.
The next construction step is to set up $M'$ so that $V_1, V_2, \ldots, V_n$ are computed
in the cells $\frac{n}{2}+1, \frac{n}{2}+2, \ldots, n$
within $\frac{n}{2}$ time steps. To this end, on an additional track 
the input is shifted to the right in every time step. Moreover,
in the first step the rightmost cell computes the sets
$V_{n-1}$ and $V_n$ with the knowledge of $a_{n-1}$ and $a_n$.
In the next time step, cell $n-1$ computes the sets $V_{n-3}$ and $V_{n-2}$
with the knowledge of $a_{n-3}$, $a_{n-2}$, $V_{n-1}$, and $V_n$.
In general, cell $n-i+1$ computes the sets $V_{n-2(i-1)-1}$ and $V_{n-2(i-1)}$
in time step $i$ with the knowledge of $a_{n-2(i-1)-1}$, $a_{n-2(i-1)}$, 
$V_{n-2(i-1)+1}$, and $V_{n-2(i-1)+2}$.
Additionally, the symbols $a_{n-2(i-1)-1}$ and $a_{n-2(i-1)}$ are stored
in another two registers. Thus, at time step
$\frac{n}{2}$ the sets $V_1$ and $V_2$ are computed in cell $\frac{n}{2}+1$.
(see~Figure~\ref{fig:sfst:1} for an example). 
The case when $n$ is odd is handled similarly. Then, 
the sets $V_1, V_2, \ldots, V_n$ are computed in the cells 
$\lceil \frac{n}{2} \rceil, \lceil \frac{n}{2} \rceil+1, \ldots, n$
within $\lceil \frac{n}{2} \rceil$ time steps.

From the sets $V_i$ now the unique accepting computation path of $M$ is
extracted. Clearly, $w\in L(G_{NFA})$ if and only if $X\in V_1$. Moreover,
there is only one production of the form $X\to a_1Z_1$ in $V_1$. Otherwise
the accepting path would not be unique. For the same reason
there is only one production of the form $Z_1\to a_2Z_2$ in $V_2$, and so on.
Let us again assume for a moment that $n$ is even, and let
the unique sequence of productions that derive $w$ be $p_1,p_2,\dots,p_n$.
At time step $\frac{n}{2}+1$, the cells $\frac{n}{2}$ and $\frac{n}{2}+1$
determine the productions $p_1$ and $p_2$. This is possible, since 
both cells can identify themselves in time step $\frac{n}{2}+1$,
and all necessary information is available in the cells and their 
neighborhoods. Additionally, the productions computed are sent to the left 
on an additional track. Furthermore, in cell $\frac{n}{2}+1$ a signal $R$ 
is sent to the right. In the next time step, 
cell $\frac{n}{2}+1$ determines $p_3$ and by signal $R$
cell $\frac{n}{2}+2$ is caused to determine $p_4$. 
Both productions are subsequently shifted to the left.
In general, $R$ arrives at cell $\frac{n}{2}+i$ at time step $\frac{n}{2}+i$ and causes
cell $\frac{n}{2}+i-1$ to determine $p_{2i-1}$ and cell $\frac{n}{2}+i$ to determine $p_{2i}$. 
Again, all necessary information is available in the cells and their neighborhoods. 
The case when $n$ is odd can be handled
similarly again. In this case, the productions $p_1, p_2, \ldots, p_n$ are computed in the same way
in the cells $\lceil \frac{n}{2} \rceil-1, \lceil \frac{n}{2} \rceil, \ldots,
n$ and similarly are sent to the left.

By construction, production $p_i$ reaches cell $i$ at time step $n$, for $1 \le i \le n$.
At this moment, the output $u_i \in B^*$ associated with the transition rule
that led to the definition of the production has to be emitted in cell $i$. 
So, it remains to be ensured that all cells are synchronized at time step $n$. 
As is described in Example~\ref{exa:sort}, this can be achieved by
simulating an FSSP on another track.

Altogether, we obtain that $M'$ simulates $M$, accepts and emits the output in
real time. 
Thus, $\sfst$ $M$ is simulated by a $\catrtrt$.
\end{proof}

\begin{figure}[!ht]
\centering
\includegraphics[scale=.9]{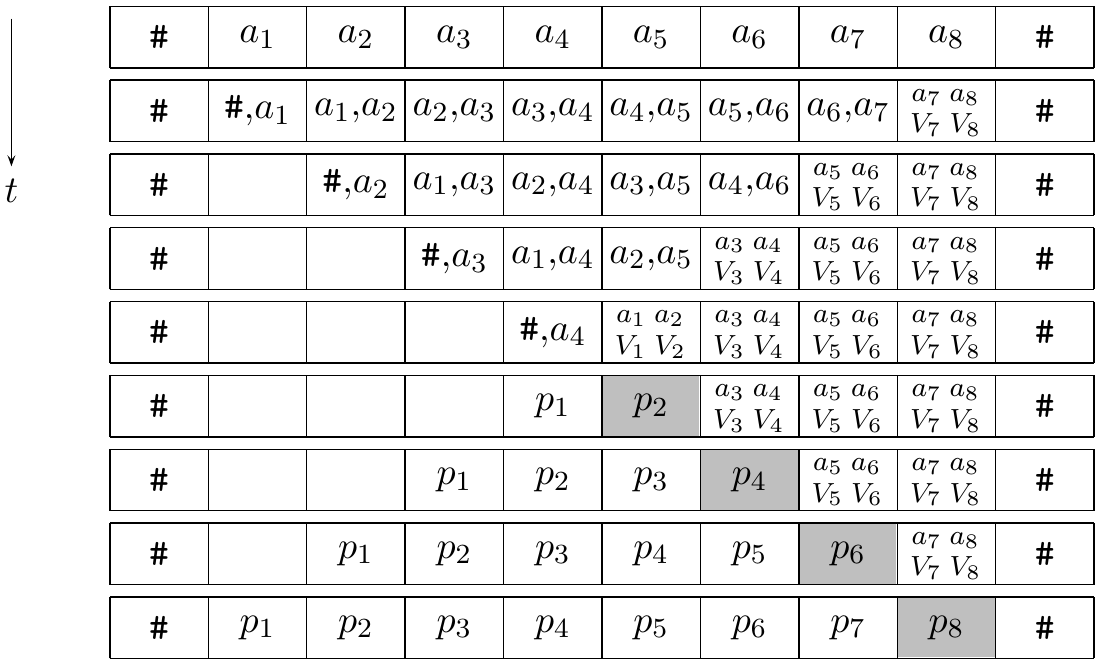}
\caption{Schematic computation of a $\catrtrt$ simulating an $\sfst$ on input $a_1a_2 \cdots a_8$.
In the first four time steps, the sets $V_1,V_2, \ldots, V_8$ are computed in cells $5$, $6$, $7$, and
$8$. Thus, an accepting path has been stored in the last four cells which is extracted in the last four time
steps and the corresponding output of the transitions is distributed to the correct cells.
The simulation of $M_{DFA}$ and the synchronization is not depicted.}
\label{fig:sfst:1}
\end{figure}

The next result follows from known results on $\iatrtrt$ and the simulation of $\iat$ by
$\cat$  as presented in Section~\ref{sec:compcap}. It is worth mentioning that 
$\tfam(\sfst)$ and $\tfam(\dpdt_\lambda)$ are
incomparable~\cite{kutrib:2010:tcia:proc}. Here, we obtain that both classes
are included in $\tfam(\catrtrt)$. 

\begin{lemma}
The family $\tfam(\dpdt_\lambda)$ is strictly included in $\tfam(\catrtrt)$.
\end{lemma}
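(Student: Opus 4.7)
The plan is to obtain the inclusion by chaining two known results, and then to separate the two families with the reversal transduction.

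For the inclusion $\tfam(\dpdt_\lambda) \subseteq \tfam(\catrtrt)$, I would first invoke the result from~\cite{kutrib:2010:tcia:proc} (quoted in the paragraph preceding Lemma~\ref{lem:sfst}) that every real-time deterministic pushdown transducer can be simulated by some $\iatrtrt$. Then I would appeal to Theorem~\ref{theo:iat-by-cat-simulation} with $k_1=k_2=1$, which gives $\tfam(\iatrtrt) \subseteq \tfam(\catrtrt)$. Composing these two inclusions yields the desired containment $\tfam(\dpdt_\lambda) \subseteq \tfam(\catrtrt)$.

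For strictness, I would exhibit the reversal transduction $R = \{\,(w,w^R)\mid w\in\{a,b\}^+\,\}$ as a separating witness. By Example~\ref{exa:mirror-language}, $R$ belongs to $\tfam(\catrtrt)$, so it remains to argue that $R \notin \tfam(\dpdt_\lambda)$. The argument is a short prefix/commitment analysis: since a $\dpdt_\lambda$ performs no $\lambda$-moves, on an input of length $n$ it executes exactly $n$ transitions, each emitting a word from a finite fixed set, so the output emitted by time $k$ is a function only of $w_1\cdots w_k$. If on some $w\in a^*$ of length $n$ the transducer has emitted a nonempty prefix $u$ after reading $w_1\cdots w_{n}$, then $u$ must be a prefix of the final output for \emph{every} extension of this prefix; in particular it must be a prefix of both $a^n$ and of $b a^n$ (the reversal of $w\cdot b$), forcing $u=\lambda$. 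Hence on input $a^n$ the transducer emits the entire output $a^n$ during its single final transition, which is impossible once $n$ exceeds the maximal length of an output word occurring in the finitely many transition rules.

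The main obstacle is the separation, not the inclusion: one must make precise the informal statement ``$\dpdt_\lambda$ cannot reverse.'' The prefix-commitment argument above is the cleanest way I see; it uses only the absence of $\lambda$-moves and the boundedness of per-transition output, both immediate from the definition of $\dpdt_\lambda$. Everything else in the proof is a bookkeeping composition of previously established results.
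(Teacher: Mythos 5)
Your proof is correct, and the inclusion half is essentially the paper's: the paper also obtains $\tfam(\dpdt_\lambda)\subseteq\tfam(\catrtrt)$ by going through iterative arrays (it cites $\tfam(\dpdt_\lambda)\subset\tfam(\iatrtrt)$ from~\cite{kutrib:2010:tcia:proc} and then invokes Corollary~\ref{cor:incl-rt}, which packages Theorem~\ref{theo:iat-by-cat-simulation}; note that for the $\iat$ convention $rt$ means $n+1$, so strictly speaking you need $k_1,k_2$ slightly larger than $1$, exactly as the corollary already accounts for). Where you genuinely diverge is the strictness argument. The paper gets strictness for free by chaining two inclusions that are already known to be strict, so its proof is two lines but leans entirely on the earlier paper's separation of $\tfam(\dpdt_\lambda)$ from $\tfam(\iatrtrt)$. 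You instead exhibit an explicit witness, the reversal transduction $\{(w,w^R)\}$, which is in $\tfam(\catrtrt)$ by Example~\ref{exa:mirror-language}, and you give a self-contained prefix-commitment argument that no $\dpdt_\lambda$ computes it: determinism plus the absence of $\lambda$-moves forces the output emitted after reading $a^{n-1}$ to be a common prefix of $a^{n-1}$ and $b\,a^{n-1}$, hence empty, so the last transition would have to emit an unboundedly long word. This argument is sound (there is a small index slip --- applying your commitment claim at position $n$ already yields the contradiction $u=\lambda\neq a^n$, so the final sentence about the ``single final transition'' is really the position-$(n-1)$ instance --- but either instance works). The trade-off: the paper's route is shorter and reuses prior machinery, while yours is more self-contained, produces a concrete separating transduction inside $\tfam(\catrtrt)$, and in fact shows the stronger fact that the witness is not computable by any finite-state or real-time pushdown transducer at all, not merely that the two families differ.
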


\begin{proof}
The assertion follows from the fact that 
$\tfam(\dpdt_\lambda)$ is strictly included in $\tfam(\iatrtrt)$
(\cite{kutrib:2010:tcia:proc}) and that $\tfam(\iatrtrt)$ is strictly
included in $\tfam(\catrtrt)$ due to Corollary~\ref{cor:incl-rt}.
\end{proof}

Finally, we will show that any $\dpdt$ can be simulated by some $\catrtlt$. This is again an
improvement in comparison with $\iat$. It has been shown in~\cite{kutrib:2010:tcia:proc} that
any $\dpdt$ can be simulated by some $\iatltlt$ which in turn can be simulated by some $\catltlt$
owing to Theorem~\ref{theo:ltlt-equal}. Here, we obtain that the simulation
can already be achieved
by some $\catrtlt$.

\begin{lemma}\label{lem:dpdt}
The family $\tfam(\dpdt)$ is strictly included in $\tfam(\catrtlt)$.
\end{lemma}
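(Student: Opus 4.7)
The plan is to establish the inclusion and its strictness separately.

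For strictness, I would use the copy transduction $\{(w, ww) \mid w \in \{a,b\}^+\}$ from Example~\ref{exa:copy}, which belongs to $\tfam(\catrtrt) \subseteq \tfam(\catrtlt)$. To show it does not lie in $\tfam(\dpdt)$, I would note that the range of any pushdown transducer is context-free: a nondeterministic pushdown automaton can guess the input symbol by symbol, simulate the transducer step by step, and match the produced output against its own input tape, accepting exactly the range. Since $\{ww \mid w \in \{a,b\}^+\}$ is a classical non-context-free language, no $\dpdt$ can realize this transduction.

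For the inclusion $\tfam(\dpdt) \subseteq \tfam(\catrtlt)$, given a $\dpdt$ $M$ I would construct a $\catrtlt$ $M'$ running several parallel tracks. The first track handles acceptance: the projection $L(M)$ is a deterministic context-free language, and every such language is accepted by some real-time two-way cellular automaton, whose construction is replicated on this track. A second track sequentially simulates the transitions of $M$, keeping the pushdown stack distributed across the cells with a traveling marker at the stack top so that each simulated step is local and constant time; because the number of $\lambda$-moves between two consecutive input-reading moves is bounded (otherwise $M$ would loop), the total number of simulated transitions is $O(n)$ and the simulation finishes within the linear-time budget. Each produced output fragment is tagged with its target cell, namely the index of the input symbol during whose ``block'' of transitions it was emitted, and is pipelined rightward on a third track to the target cell, which also completes in linear time. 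Finally, an FSSP on a fourth track synchronizes all cells at the linear-time deadline, at which moment each cell emits its accumulated (possibly empty) output.

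The main obstacle, I expect, is the coexistence of real-time acceptance with the linear-time output distribution on the same transducer. Real-time acceptance of a deterministic context-free language by a two-way cellular automaton is itself a nontrivial but known construction, and once it is in place the output layer must pipeline fragments without congestion while the stack simulation runs in parallel. The linear-time budget on the output side is generous enough to let every fragment travel from the leftmost region, where most of the simulation unfolds, to its target cell; moreover, the per-cell output length stays uniformly bounded because the per-transition output of $M$ is uniformly bounded and the average number of transitions per input symbol is a constant, so no cell overflows its output register.
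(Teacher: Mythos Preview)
Your strictness argument is correct and different from the paper's: the paper uses the transduction $\{(ww,\,wc^{|w|})\mid w\in\{a,b\}^+\}$ and argues via the \emph{domain} not being context free, whereas you use $\{(w,ww)\mid w\in\{a,b\}^+\}$ and argue via the \emph{range}. Both work; your route needs the (standard) observation that the range of a pushdown transducer is always a context-free language, which you supply.

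For the inclusion, your high-level architecture matches the paper's---a real-time track accepting the deterministic context-free language $L(M)$, plus a linear-time simulation of $M$ producing the output---but your output-distribution mechanism has a gap. You propose to ``tag each produced output fragment with its target cell, namely the index of the input symbol,'' and pipeline it rightward. Cell indices are unbounded, so they cannot be carried in a finite state; as written, the tagging is not implementable. Relatedly, your justification that ``the per-cell output length stays uniformly bounded because \ldots\ the \emph{average} number of transitions per input symbol is a constant'' is not a valid inference: a bounded average does not bound the maximum, and a single input symbol can trigger a long run of $\lambda$-moves unless you first normalise $M$. The paper avoids both problems by keeping the state simulation fixed at the leftmost cell (so there is no moving stack-top marker that the input has to chase), maintaining the pushdown as a separate constant-time data structure, and---crucially---writing the transitions into a \emph{queue} on a further track, grouping each input-consuming move with the bounded block of $\lambda$-moves that follows it into a single queue symbol. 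The queue automatically lays one symbol per cell in order, so no explicit indexing is needed; a final right-then-left signal causes every cell to emit the output associated with the transitions stored in its queue register (or $\lambda$ if empty). Replacing your tagging/pipelining step by this queue-plus-sweep mechanism closes the gap.
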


\begin{proof}
The transduction $T=\{\,(ww,wc^{|w|}) \mid w \in \{a,b\}^+\,\}$ cannot
be computed by any pushdown transducer, since the language $\{\,ww \mid w \in
\{a,b\}^+\,\}$ is not context free.

On the other hand, transduction $T$ can be computed by a $\catrtrt$
and, thus, by a $\catrtlt$. To this end, as in Example~\ref{exa:sort}
two instances of the FSSP are initiated at both ends of the array
that cause each cell to fire at time $n$. Firing of a cell in the left 
half means to emit the original input symbol and 
firing in the second half means to emit symbol $c$. Since
the language $\{\,ww \mid w \in \{a,b\}^+\,\}$ is accepted by a real-time 
$\ca$ this shows $T\in \tfam(\catrtrt)$.

Given a deterministic pushdown transducer $M$
with state set $S$ and pushdown store alphabet $G$, we 
next construct a $\catrtlt$ $M'$ simulating $M$.

There is a constant $k_1\geq 0$ such that $M$ cannot push more than $k_1$ 
pushdown symbols in one time step, and $M$ can pop at most one pushdown 
symbol in one time step.
Moreover, there is a constant $k_2 \le |S| \cdot |G|$ such that 
$M$ cannot perform more than $k_2$ subsequent moves on empty input.
{F}rom these facts follows that $M$ works in linear time. Let $k=\max\{k_1,k_2\}$.

Basically, the $\catrtlt$ $M'$ computes five tasks on different tracks. 
On the first track, a deterministic pushdown automaton is simulated in real
time that accepts the language $L(M)$. The details of such a simulation
can be found in~\cite{kutrib:2008:ca-cpv}.

It remains to be shown how $M'$ computes the output of $M$.
The second track is used as follows. The leftmost cell simulates
the state transitions of $M$ while on request the other cells shift the input
to the left, thus, providing the input for the leftmost cell. 
In detail, when the simulation of $M$ consumes an input symbol, a signal 
is sent to the right which causes the cells to shift the input one position 
to the left. Otherwise, when $M$ simulates a transition on empty input 
no signal is sent.

The pushdown store of $M$ is simulated on the third track.
In~\cite{kutrib:2008:ca-cpv} it has been shown how to simulate
the data structure pushdown store without loss of time.
Since here at most $k$ symbols are pushed in one time step, 
the simulation can be realized by grouping $k$ pushdown symbols together.
  
On the fourth track, a data structure queue is implemented as is also shown
in~\cite{kutrib:2008:ca-cpv}. The leftmost cell stores transitions
simulated on the second track into this queue. Since at most $k$ 
consecutive transitions are on empty input, grouping at most $k+1$ transitions
into one symbol to be stored ensures that any symbol in the queue 
represents at least one transition consuming an input symbol.
So, at most as many symbols 
are stored as the input is long.
Since $M$ works in linear time, it is not difficult to see that all these tasks
are simulated by $M'$ in linear time as well. 

The final task is to emit the output. After acceptance
of the input, on the fifth track a signal is started from the leftmost cell to the right
which provides sufficient time so that all symbols are properly stored in the
queue. Having reached the rightmost cell, the signal changes its direction 
and moves back to the leftmost cell. On its way it causes each cell
passed through to emit $u_1 u_2 \cdots u_m \in B^*$, 
if it stores $(p_1, p_2, \ldots, p_m)$ in its fourth register (the queue
register), where $u_i$ is the output associated with transition~$p_i$, $1\leq
i\leq m$. If the fourth register is empty, the cell emits $\lambda$.
When the signal arrives at the leftmost cell again, 
the transduction is completed in linear time.
\end{proof}


\begin{thebibliography}{1}
\providecommand{\bibitemdeclare}[2]{}
\providecommand{\surnamestart}{}
\providecommand{\surnameend}{}
\providecommand{\urlprefix}{Available at }
\providecommand{\url}[1]{\texttt{#1}}
\providecommand{\href}[2]{\texttt{#2}}
\providecommand{\urlalt}[2]{\href{#1}{#2}}
\providecommand{\doi}[1]{doi:\urlalt{http://dx.doi.org/#1}{#1}}
\providecommand{\bibinfo}[2]{#2}

\bibitemdeclare{book}{Aho:1972:tptcv1}
\bibitem{Aho:1972:tptcv1}
\bibinfo{author}{Alfred~V. \surnamestart Aho\surnameend} \&
  \bibinfo{author}{Jeffrey~D. \surnamestart Ullman\surnameend}
  (\bibinfo{year}{1972}): \emph{\bibinfo{title}{The Theory of Parsing,
  Translation, and Compiling}}.
\newblock \bibinfo{volume}{I: {P}arsing}, \bibinfo{publisher}{Prentice-Hall}.

\bibitemdeclare{book}{Berstel:1979:tcfl}
\bibitem{Berstel:1979:tcfl}
\bibinfo{author}{Jean \surnamestart Berstel\surnameend} (\bibinfo{year}{1979}):
  \emph{\bibinfo{title}{Transductions and Context-Free-Languages}}.
\newblock \bibinfo{publisher}{Teubner}.

\bibitemdeclare{article}{Ibarra:1991:PPOWLAFSM}
\bibitem{Ibarra:1991:PPOWLAFSM}
\bibinfo{author}{Oscar~H. \surnamestart Ibarra\surnameend},
  \bibinfo{author}{Tao \surnamestart Jiang\surnameend} \& \bibinfo{author}{Hui
  \surnamestart Wang\surnameend} (\bibinfo{year}{1991}):
  \emph{\bibinfo{title}{Parallel Parsing on a One-Way Linear Array of
  Finite-State Machines}}.
\newblock {\sl \bibinfo{journal}{Theoret. Comput. Sci.}} \bibinfo{volume}{85},
  pp. \bibinfo{pages}{53--74}, \doi{10.1007/3-540-52048-1\_51}.

\bibitemdeclare{incollection}{kutrib:2008:ca-cpv}
\bibitem{kutrib:2008:ca-cpv}
\bibinfo{author}{Martin \surnamestart Kutrib\surnameend}
  (\bibinfo{year}{2008}): \emph{\bibinfo{title}{Cellular Automata -- A
  Computational Point of View}}.
\newblock In: {\sl
  \bibinfo{booktitle}{New Developments in Formal Languages and Applications}},
  Chapter~\bibinfo{chapter}{6}, \bibinfo{publisher}{Springer}, pp.
  \bibinfo{pages}{183--227}, \doi{10.1007/978-3-540-78291-9\_6}.

\bibitemdeclare{incollection}{kutrib:2009:calt}
\bibitem{kutrib:2009:calt}
\bibinfo{author}{Martin \surnamestart Kutrib\surnameend}
  (\bibinfo{year}{2009}): \emph{\bibinfo{title}{Cellular Automata and Language
  Theory}}.
\newblock In: {\sl
  \bibinfo{booktitle}{Encyclopedia of Complexity and System Science}},
  \bibinfo{publisher}{Springer}, pp. \bibinfo{pages}{800--823}, \doi{10.1007/978-0-387-30440-3\_54}.

\bibitemdeclare{inproceedings}{kutrib:2010:tcia:proc}
\bibitem{kutrib:2010:tcia:proc}
\bibinfo{author}{Martin \surnamestart Kutrib\surnameend} \&
  \bibinfo{author}{Andreas \surnamestart Malcher\surnameend}
  (\bibinfo{year}{2010}): \emph{\bibinfo{title}{Transductions Computed by
  Iterative Arrays}}.
\newblock In: {\sl \bibinfo{booktitle}{Symposium on Cellular Automata -- Journ{\'{e}}es
  Automates Cellulaires (JAC 2010)}}, {\sl \bibinfo{series}{TUCS Lecture
  Notes}}~\bibinfo{volume}{13}, \bibinfo{publisher}{Turku Center for Computer
  Science}, pp. \bibinfo{pages}{156--167}.

\bibitemdeclare{inproceedings}{Smith:1970:CAFL}
\bibitem{Smith:1970:CAFL}
\bibinfo{author}{Alvy~Ray \surnamestart Smith~III\surnameend}
  (\bibinfo{year}{1970}): \emph{\bibinfo{title}{Cellular Automata and Formal
  Languages}}.
\newblock In: {\sl \bibinfo{booktitle}{Symposium on
  Switching and Automata Theory (SWAT 1970)}}, \bibinfo{organization}{IEEE}, pp.
  \bibinfo{pages}{216--224}, \doi{10.1109/SWAT.1970.4}.

\bibitemdeclare{article}{Smith:1972:RTLRODCA}
\bibitem{Smith:1972:RTLRODCA}
\bibinfo{author}{Alvy~Ray \surnamestart Smith~III\surnameend}
  (\bibinfo{year}{1972}): \emph{\bibinfo{title}{Real-Time Language Recognition
  by One-Dimensional Cellular Automata}}.
\newblock {\sl \bibinfo{journal}{J. Comput. System Sci.}} \bibinfo{volume}{6},
  pp. \bibinfo{pages}{233--253}, \doi{10.1016/S0022-0000(72)80004-7}.

\bibitemdeclare{article}{Weber:1995:edsvt}
\bibitem{Weber:1995:edsvt}
\bibinfo{author}{Andreas \surnamestart Weber\surnameend} \&
  \bibinfo{author}{Reinhard \surnamestart Klemm\surnameend}
  (\bibinfo{year}{1995}): \emph{\bibinfo{title}{Economy of Description for
  Single-Valued Transducers}}.
\newblock {\sl \bibinfo{journal}{Inform. Comput.}} \bibinfo{volume}{118}, pp.
  \bibinfo{pages}{327--340}, \doi{10.1006/inco.1995.1071}.

\end{thebibliography}
\end{document}